\newtheorem{theorem}{Theorem}
\newtheorem{definition}{Definition}
\newtheorem{example}{Example}
\title{Training-Embedded, Single-Symbol ML-Decodable, Distributed STBCs for Relay Networks}
\begin{document}
\author{
\authorblockN{J. Harshan}
\authorblockA{Dept. of ECE,\\ Indian Institute of Science \\
Bangalore 560012, India\\
{\em Email:} \url{harshan@ece.iisc.ernet.in}\\
}
\and
\authorblockN{B. Sundar Rajan}
\authorblockA{Dept. of ECE, \\Indian Institute of Science \\
Bangalore 560012, India\\
\hspace{0.2cm} {\em Email:} \url{bsrajan@ece.iisc.ernet.in}\\
}
\and
\authorblockN{Are Hj{\o}rungnes}
\authorblockA{UNIK-University Graduate Center \\
University of Oslo, \\NO-2027, Kjeller, Norway\\
{\em Email:} \url{arehj@unik.no}\\
}
}

\maketitle

%
\begin{abstract}

Recently, a special class of complex designs called Training-Embedded Complex Orthogonal Designs (TE-CODs) has been introduced to construct single-symbol Maximum Likelihood (ML) decodable (SSD) distributed space-time block codes (DSTBCs) for two-hop wireless relay networks using the amplify and forward protocol. However, to implement DSTBCs from square TE-CODs, the overhead due to the transmission of training symbols becomes prohibitively large as the number of relays increase. In this paper, we propose TE-Coordinate Interleaved Orthogonal Designs (TE-CIODs) to construct SSD DSTBCs. Exploiting the block diagonal structure of TE-CIODs, we show that, the overhead due to the transmission of training symbols to implement DSTBCs from TE-CIODs is smaller than that for TE-CODs. We also show that DSTBCs from TE-CIODs offer higher rate than those from TE-CODs for identical number of relays while maintaining the SSD and full-diversity properties. 
While TE-CODs offer full-diversity for arbitrary complex constellations, TE-CIODs offer full-diversity for any constellation when appropriately rotated.
\end{abstract}
\begin{keywords}
Cooperative diversity, single-symbol ML decoding, distributed space-time coding, complex orthogonal designs.
\end{keywords}

\section{Introduction and Preliminaries}
\label{sec1}
\begin{LARGE}D\end{LARGE}istributed space-time coding has been a powerful technique for achieving spatial diversity in wireless networks with single antenna terminals in two hop wireless networks \cite{LaW,JiH1}. The technique involves a two phase protocol where, in the first phase, the source broadcasts the information to the relays and in the second phase, the relays linearly process the signals received from the source and forward them to the destination such that the signal at the destination appears as a Space-Time Block Code (STBC). Such STBCs, generated distributively by the relay nodes, are called Distributed Space-Time Block Codes (DSTBCs).\\
\indent In a co-located Multiple-Input Multiple-Output (MIMO) system, an STBC is said to be Single-Symbol Maximum Likelihood (ML) Decodable (SSD) if the ML decoding metric splits as a sum of several terms, with each term being a function of only one of the information symbols \cite{KhR}. A DSTBC is said to be SSD if the STBC seen by the destination from the set of relays is SSD. Since the work of \cite{LaW, JiH1}, significant efforts have been made to design SSD DSTBCs. Towards that direction, SSD DSTBCs have been proposed for cooperative networks in \cite{YiK}, \cite{HaR}, \cite{YiK2} and \cite{SCR}. 



\indent It is well known that DSTBCs from complex orthogonal designs (CODs) \cite{TJC, Lia}, (both square and non-square CODs other than the Alamouti design), coordinate interleaved orthogonal designs (CIODs) \cite{KhR} and Clifford unitary weight designs (CUWDs) \cite{KaR} lose their single-symbol ML decodable (SSD) property when used in two-hop wireless relay networks using the amplify and forward protocol. In \cite{HRH}, a new class of high rate, training-embedded (TE) SSD DSTBCs are proposed from TE-CODs. The constructed codes include the training symbols in the structure of the code which has been shown to be the key point to obtain high rate along with the SSD property. The authors of \cite{HRH} show that non-square TE-CODs provide higher rates (in symbols per channel use) compared to the known SSD DSTBCs \cite{YiK2}, \cite{SCR} for relay networks when the number of relays is less than $10.$ Note that, the known codes in \cite{YiK2}, \cite{SCR} and \cite{HRH} (non-square TE-CODs) have exponential decoding delay and hence, in this paper, we focus on constructing SSD DSTBCs with low delay only.\\
\indent In order to implement a square TE-COD in a network with $2^{a}$ relays, a total of 
\begin{equation*}
N^{t}_{\mbox{TE-COD}} = \lceil \frac{2^{a} - a - 1}{2} \rceil + 2^{a} - a - 1
\end{equation*}
channel uses are required for transmitting the training symbols \cite{HRH} ($\lceil \frac{2^{a} - a - 1}{2} \rceil$ and $2^{a} - a - 1$ channel uses are required in the first phase and second phase respectively). However, the number of information symbols embedded in such a design is $a+1$ and hence, the number of channel uses required for transmitting the training symbols per information symbol is given by
\begin{equation}
\label{tr_ovh_te_cod}
\frac{\lceil \frac{2^{a} - a - 1}{2} \rceil + 2^{a} - a - 1}{a+1}.
\end{equation}
From \eqref{tr_ovh_te_cod}, it is clear that the overhead due to the transmission of training symbols (overhead both in-terms of power and bandwidth) increases as the number of relays increase, which is one of the drawbacks of implementing square TE-CODs as TE DSTBCs. Also, the number of complex symbols that a TE-COD for $2^{a}$ relays can accommodate is only $a + 1$ (which is same as that of a COD for $2^{a}$ antennas), which is a drawback of square TE-CODs. Therefore, the rate of TE-DSTBCs from TE-CODs (in symbols per channel use) when employed as in \cite{HRH} is given by
\begin{equation}
\label{rate_te_cod}
R_{\mbox{TE-CODs}} = \frac{a + 1}{a + 1 + \lceil \frac{2^{a} - a - 1}{2} \rceil + 2^{a}}
\end{equation}
wherein $a + 1 + \lceil \frac{2^{a} - a - 1}{2} \rceil$ and $2^{a}$ channel uses are used in first phase and second phase respectively. Note that $R_{\mbox{TE-CODs}}$ decreases exponentially with the number of relays, $2^{a}$.

\indent In this paper, we propose training embedded SSD DSTBCs for relay networks with rates higher than that of DSTBCs from TE-CODs (given in \eqref{rate_te_cod}). In particular, we employ linear precoding of information symbols at the source \cite{HaR} and use CIODs of \cite{KhR} instead of CODs to obtain a class of high-rate SSD DSTBCs. The main contributions of this paper can be summarized as follows:
\begin{itemize}
\item We employ precoding of information symbols at the source \cite{HaR} to construct high rate, low-delay, SSD DSTBCs for two-hop wireless relay networks based on the amplify and forward protocol. On the similar lines of \cite{HRH}, the proposed method has an in-built training scheme for the relays to learn the phase components of their backward channels which is shown to be the key point to obtain the SSD property. 
\item When all the zero entries of a COD (square or non-square) is replaced by a constant, the resulting design is called a Training-Embedded-CODs (TE-CODs) \cite{HRH}. Using square TE-CODs as ingredients, we construct TE-CIODs using the coordinate interleaved variables. Unlike TE-CODs, not all the entries of a TE-CIOD are non-zero. In particular, TE-CIODs have a block diagonal structure. 
\item It is well known that the number of complex variables that a CIOD can accommodate ($2a$ variables for $2^{a}$ antennas) is more than that of a COD ($a+1$ variables for $2^{a}$ antennas) for the same number of antennas \cite{KhR}. As a result, TE-CIODs continue to have larger number of information variables than TE-CODs. Exploiting the block diagonal structure of TE-CIODs, we show that the minimum number of training symbols required to implement a TE-CIOD as a SSD TE-DSTBC in a wireless network with $2^{a}$ relays is, 
\begin{equation*}
\lceil \frac{2^{a-1} - a}{2} \rceil + 2^{a-1} - a	
\end{equation*}
which is lesser than the number required for implementing TE-CODs for the same number of relays (which is given by $N^{t}_{\mbox{TE-COD}}$). Considering (i) the number of channel uses for transmitting the training symbols and (ii) the number of complex symbols in the design, TE-CIOD, we show that the rate of TE-DSTBCs from TE-CIODs is 
\begin{equation}
\label{rate_te_ciod}
R_{\mbox{TE-CIODs}} = \frac{2a}{2a + \lceil \frac{2^{a-1} - a}{2} \rceil + 2^{a}}.
\end{equation}
Hence, comparing \eqref{rate_te_ciod} with \eqref{rate_te_cod}, TE-DSTBCs from TE-CIODs provide higher rates (in symbols per channel use) compared to TE-DSTBCs from TE-CODs for a specified number of relays in a two-hop network, while retaining the SSD and full-diversity property. We highlight that the above rate advantage comes mainly from the block diagonal structure of TE-CIODs.
\end{itemize}
\indent \textit{Notations:} Throughout the paper, boldface letters and capital boldface letters are used to represent vectors and matrices respectively. For a complex matrix $\textbf{X}$, the matrices $\textbf{X}^*$, $\textbf{X}^T$,  $\textbf{X}^{H}$, $|\textbf{X}|$, $\mbox{Re}~\textbf{X}$ and $\mbox{Im}~\textbf{X}$ denote, respectively, the conjugate, transpose, conjugate transpose, determinant, real part and imaginary part of $\textbf{X}$. The element in the $r_1$-th row and the $r_2$-th column of the matrix $\textbf{X}$ is denoted by $[\textbf{X}]_{r_1,r_2}$. The $ T\times T$ identity matrix and the $T \times T$ zero matrix are respectively denoted by $\textbf{I}_T$ and $\textbf{0}_{T \times T}$. The magnitude of a complex number $x$, is denoted by $|x|$ and $E \left[x\right]$ is used to denote the expectation of the random variable $x.$ A circularly symmetric complex Gaussian random vector, $\textbf{x},$ with mean $\boldsymbol{\mu}$ and covariance matrix $\mathbf{\Gamma}$ is denoted by $\textbf{x} \sim \mathcal{CSCG} \left(\boldsymbol{\mu}, \mathbf{\Gamma} \right) $. The set of all integers, the real numbers and the complex numbers are respectively, denoted by ${\mathbb Z}$, $\mathbb{R}$ and ${\mathbb C}$ and  $\bf{i}$ is used to represent $\sqrt{-1}.$

The remaining content of the paper is organized as follows: The system model for our training-embedded precoded distributed space-time coding is described in Section \ref{sec2} which differs from the model of \cite{HRH} due to the precoding at the source. Construction of  TE-CIODs is presented in Section \ref{sec3}. The  SSD property, full-diversity property and comparison of rates with TE-CODs are discussed in Section \ref{sec4}. Simulation results are presented in Section \ref{sec5} and Section \ref{sec6} constitutes a short summary and possible directions for further research.
\begin{figure}
\centering
\includegraphics[width=3in]{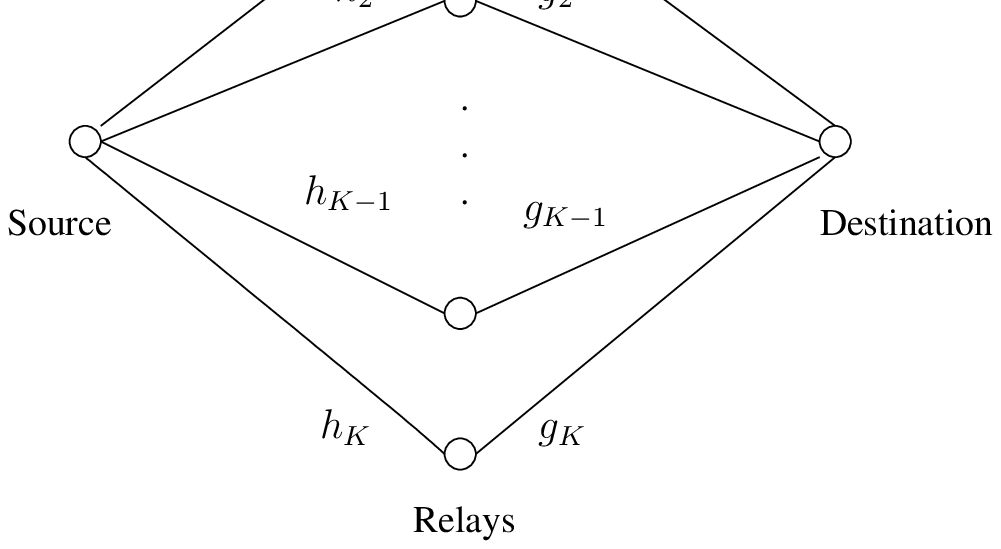}
\caption{Two-hop wireless relay network model.}
\label{model_network}
\end{figure}
\section{System Model}
\label{sec2}
The wireless network considered, as shown in Fig. \ref{model_network}, consists of $K + 2$ nodes, each having a single antenna. There is one source node and one destination node. All the other $K$ nodes are relays. We denote the channel from the source node to the $\lambda$-th relay as $h_{\lambda}$ and the channel from the $\lambda$-th relay to the destination node as $g_{\lambda}$ for $\lambda=1,2, \cdots, K$.
\noindent The following assumptions are made in our model:

\begin{itemize}
\item All the nodes are half duplex constrained.
\item Fading coefficients  $h_{\lambda}$ and $g_{\lambda}$ are i.i.d. $ \mathcal{CSCG} \left(0,1 \right)$ with a coherence time interval of at least $N$ and $T$ channel uses respectively, where $N$ is the number of channel uses for the transmission from the source to the relays and $T$ is the number of channel uses for transmission from the relays to the destination.
\item All the nodes are synchronized at the symbol level.
\item Relay nodes have the knowledge of only the phase components of the fade coefficients $h_{\lambda}$.
\item The destination knows all the fading coefficients $g_{\lambda}$, $h_{\lambda}$ for $\lambda = 1, 2, \cdots K,$ exactly.
\end{itemize}

\begin{figure*} 
\begin{equation} 
\label{concatenation}
\bar{\textbf{r}}_{\lambda} =
e^{-\textbf{i}\angle{h_k}}  \left[e^{\textbf{i}2 (\angle{\alpha}+\angle{h_k})}\textbf{r}_{\lambda}^{*}(1) ~e^{\textbf{i}2 (\angle{\alpha}+\angle{h_k})}\textbf{r}_{\lambda}^{*}(2) ~\cdots ~e^{\textbf{i}2 (\angle{\alpha}+\angle{h_k})}\textbf{r}^{*}_{\lambda}
(2^{a-1}-a) ~\textbf{r}^T_{\lambda} \right]^{T} \in \mathbb{C}^{(2^{a-1} + a) \times 1}
\end{equation}.
\end{figure*}
\begin{figure*}
\begin{equation}
\label{code_word}
\textbf{X}  = \left[
\textbf{A}_{1}\bar{\textbf{x}} + \textbf{B}_{1}\bar{\textbf{x}}^{*}  ~~
\textbf{A}_{2}\bar{\textbf{x}} + \textbf{B}_{2}\bar{\textbf{x}}^{*}  ~~
\cdots ~~
\textbf{A}_{\lambda}\bar{\textbf{x}} + \textbf{B}_{\lambda}\bar{\textbf{x}}^{*}
\right] \in \mathbb{C}^{2^a \times 2^a}.
\end{equation}
\end{figure*}
\begin{figure*}
\begin{equation}
\label{covariance_matrix}
\textbf{R} = {\frac{P_{2}2^a}{P_{r}}}\left[ \sum_{k=1}^{2^a} |g_{\lambda}|^{2}\left\lbrace \textbf{A}_{\lambda}\textbf{A}_{\lambda}^{H} + \textbf{B}_{\lambda}\textbf{B}_{\lambda}^{H}\right\rbrace \right]  + \textbf{I}_{T} \in \mathbb{C}^{2^a \times 2^a}.\\
\end{equation}
\end{figure*}
\begin{figure*}
\begin{equation}
\label{ML}
\hat{\textbf{x}} = arg\, \min_{\textbf{x} \in \mathcal{S}} \left[ -2 \mbox{Re} \left(\sqrt{\frac{P_{1}P_{2}N2^a}{P_{r}}}\textbf{g}^{H}\textbf{X}^{H}\textbf{R}^{-1}\textbf{y}\right) + {\frac{P_{1}P_{2}N2^a}{P_{r}}}\textbf{g}^{H}\textbf{X}^{H}\textbf{R}^{-1}\textbf{X}\textbf{g} \right] \in \mathbb{C}^{2a \times 1}.
\end{equation}
\hrule
\end{figure*}
\noindent The source is equipped with a codebook $\mathcal{S}$ = $\left\{ \textbf{x}_{1},\, \textbf{x}_{2},\, \textbf{x}_{3},\, \cdots, \textbf{x}_{L} \right\} $ consisting of information vectors $\textbf{x}_{l} \in \mathbb{C}^{N \times 1}$ such that $E\left[\textbf{x}_{l}^{H}\textbf{x}_{l}\right]$ = $1$. For this scenario, in \cite{HRH} we proposed TE-CODs for arbitrary values of $K.$ In this paper, we show that for the values $K=2^a,$ for any positive integer $a,$ we can use CIODs of \cite{KhR} to obtain TE-CIODs which have higher rate than TE-CODs for identical number of relays.

In systems employing CIODs with $2^a$ relays, the number of information symbols transmitted is $2a,$ and the information vectors are of the form,
\begin{equation*}
\textbf{x} = [ \underbrace{\alpha ~\alpha ~\cdots ~\alpha}_{\lceil \frac{2^{a-1}-a}{2} \rceil ~\mbox{times}} ~x_{1} ~x_{2} \cdots ~x_{2a}]^T \in \mathbb{C}^{N \times 1}, 
\end{equation*}
where the complex variables $x_{1}, x_{2} \cdots x_{2a}$ take values from a complex signal set denoted by $\mathcal{M},$  $\alpha \in \mathbb{C}$ is a non-zero complex constant chosen as the training symbol and $N= \lceil \frac{2^{a-1}-a}{2} \rceil+2a.$  The value of $\alpha$ is chosen such that the condition $E\left[\textbf{x}_{l}^{H}\textbf{x}_{l}\right]$ = 1 is satisfied. The value of $\alpha$ is assumed to be known to all the relays and the destination.

The source is also equipped with a pair of $N \times N$ matrices $\textbf{P}$ and $\textbf{Q}$ called precoding matrices. Every transmission from the source to the destination comprises of two phases. When the source needs to transmit an information vector $\textbf{x} \in \mathcal{S}$ to the destination, it generates a new vector $\hat{\textbf{x}}$ as,
\begin{equation}
\label{cord_int}
\hat{\textbf{x}} =  \textbf{P}\textbf{x} + \textbf{Q}\textbf{x}^{*}=[ \underbrace{\alpha ~\alpha ~\cdots ~\alpha}_{\lceil \frac{2^{a-1}-a}{2} \rceil ~\mbox{times}} ~\hat{x}_{1} ~\hat{x}_{2} \cdots ~\hat{x}_{2a}]^T \in \mathbb{C}^{N \times 1}
\end{equation}
\noindent
where the precoding matrices satisfy the condition $E\left[\hat{\textbf{x}}^{H}\hat{\textbf{x}}\right]$ = $1$ and broadcasts the vector $\hat{\textbf{x}}$ to all the $2^a$ relays (but not to the destination which is assumed to be located far from the source). The precoding matrices are chosen such that the linear processing is performed only on the information symbols but not on the training symbols. The received vector at the $\lambda$-th relay is given by $\textbf{r}_{\lambda} = \sqrt{P_{1}N}h_{\lambda}\hat{\textbf{x}} + \textbf{n}_{\lambda} \in \mathbb{C}^{N \times 1}$, for all $\lambda = 1,2,\cdots, 2^a$ where $\textbf{n}_{\lambda} \sim \mathcal{CSCG} \left(\textbf{0}_{N \times 1},\textbf{I}_{N} \right) $ is the additive noise at the $\lambda$-th relay and $P_{1}$ is the total power used at the source node for every channel use. Using the $ N = \lceil \frac{2^{a-1}-a}{2} \rceil + 2a$ length vector, $\textbf{r}_{\lambda}$, the $\lambda$-th relay constructs the $2^{a-1} + a$ length new vector $\bar{\textbf{r}}_{\lambda}$ given by \eqref{concatenation} shown at the top of this page, 
where $\textbf{r}_{\lambda}(i)$ denotes the $i$-th component of the vector $\textbf{r}_{\lambda}$. The $\lambda$-th relay is assumed to obtain a perfect estimate of the phase component of $h_{\lambda}$ using the training symbols sent during the first $\lceil \frac{2^{a-1}-a}{2} \rceil$ channel uses in the first phase. This has enabled the phase compensation in \eqref{concatenation} which can also be given by 
\begin{equation*}
\bar{\textbf{r}}_{\lambda} = \sqrt{P_{1}N}|h_{\lambda}|\bar{\textbf{x}} + \bar{\textbf{n}}_{\lambda}
\end{equation*}
where $\bar{\textbf{x}} = [ \underbrace{\alpha ~\alpha ~\cdots ~\alpha}_{2^a-a ~\mbox{ times }} ~\hat{x}_{1}~\hat{x}_{2} \cdots \hat{x}_{2a}]^{T} \in \mathbb{C}^{(2^{a-1} + a) \times 1}.$ 
Note that the concatenating operation in \eqref{concatenation} continues to keep the components of $\bar{\textbf{n}}_{\lambda}$ identically distributed and uncorrelated to each other.

In the second phase, all the relay nodes are scheduled to transmit $T$ length vectors to the destination simultaneously. In \cite{HRH}, we used arbitrary values for $T,$ and in this paper since we make the destination see a square design, henceforth, we take $T=2^a.$

Each relay is equipped with a fixed pair of matrices $\textbf{A}_{\lambda}$, $\textbf{B}_{\lambda} \in \mathbb{C}^{2^a \times (2^{a-1} + a)}$ and is allowed to linearly process the vector $\bar{\textbf{r}}_{\lambda}$. The $\lambda$-th relay is scheduled to transmit
\begin{equation}
\label{amlify_technique}
\textbf{t}_{\lambda} = \sqrt{\frac{P_{2}2^a}{P_{r}}}\left\lbrace \textbf{A}_{\lambda}\bar{\textbf{r}}_{\lambda} + \textbf{B}_{\lambda}\bar{\textbf{r}}_{\lambda}^{*}\right\rbrace \in \mathbb{C}^{2^a \times 1},
\end{equation}
where $P_{2}$ is the total power used at each relay for every channel use in the second phase and $P_{r}$ is the average norm of the vector $\bar{\textbf{r}}_{\lambda}$. Note that $P = P_{1} + KP_{2}$ becomes the total power transmitted by all the nodes for a single channel use. The vector received at the destination is given by
\begin{equation*}
\textbf{y} = \sum_{\lambda = 1}^{2^a} g_{\lambda}\textbf{t}_{\lambda} + \textbf{w} \in \mathbb{C}^{2^a \times 1},
\end{equation*}
\noindent where $\textbf{w} \sim \mathcal{CSCG} \left(\boldsymbol{0}_{2^a \times 1},\textbf{I}_{2^a} \right)$ is the additive noise at the destination. Substituting for $\textbf{t}_{\lambda}$, $\textbf{y}$ can be written as
\begin{equation*}
\textbf{y} = \sqrt{\frac{P_{1}P_{2} N2^a}{P_{r}}}\textbf{X}\textbf{g}  + \textbf{n},
\end{equation*}
\noindent where
\begin{itemize}
\item $\textbf{n} = \sqrt{\frac{P_{2}2^a}{P_{r}}}\left[ \sum_{\lambda=1}^{2^a} g_{\lambda}\left\lbrace \textbf{A}_{\lambda} \bar{\textbf{n}}_{\lambda} + \textbf{B}_{\lambda} \bar{\textbf{n}}_{\lambda}^{*}\right\rbrace \right]  + \textbf{w} \in \mathbb{C}^{2^a \times 1}   \label{bfN}.$
\item The equivalent channel \textbf{g} is given by $[|h_{1}|g_{1} ~ |h_{2}|g_{2} ~ \cdots ~ |h_{2^a}|g_{2^a} ]^{T} \in \mathbb{C}^{2^a \times 1}.$
\item Every codeword $\textbf{X} \in \mathbb{C}^{2^a \times 2^a}$ which is of the form \eqref{code_word} (shown at the top of this page) is a function of the information vector $\textbf{x}$ through $\bar{\textbf{x}}$.
\end{itemize}

The covariance matrix $\textbf{R} \in \mathbb{C}^{2^a \times 2^a}$ of the noise vector $\textbf{n}$ is given in \eqref{covariance_matrix} (top of the next page). Note that $\textbf{R}$ depends on the choice of the relay matrices $\textbf{A}_\lambda$ and $\textbf{B}_\lambda.$ The relay matrices need to be chosen such that the resulting code seen by the destination is SSD. 

The Maximum Likelihood (ML) decoder decodes for a vector $\hat{\textbf{x}}$ given in \eqref{ML} (shown at the top of the next page).
\section{Construction of TE-CIODs}
\label{sec3}
In this section, we present the construction of the class of TE-CIODs.
\begin{definition}
\label{def_alpha_cod}
Let the $2^{a-1} \times 2^{a-1}$ matrix $\textbf{X}_1$ represent a COD in $a$ complex variables \cite{TJC}. If the zeros in the design $\textbf{X}_1$ are replaced by a non-zero constant say $\alpha \in \mathbb{C}$, then we refer $\textbf{X}_1$ as a square TE-COD \cite{HRH}.
\end{definition}

The above definition holds both for  the classes of  square CODs as well as non-square CODs which are discussed in detail in \cite{HRH}. In this paper, only square TE-CODs are relevant.

\begin{example}
\label{example_1}
For the well known $4 \times 4$ COD \cite{TJC},  in  the variables  $x_{1}$, $x_{2}$ and $x_{3},$ the corresponding TE-COD is given by,
\begin{equation}
\textbf{X}_{\mbox{TE-COD}} = \left[\begin{array}{rrrr}
x_{3} & \alpha & x_{2} & x_{1}\\
\alpha & x_{3} & x_{1}^{*} & -x_{2}^{*}\\
x_{2}^{*} & x_{1} & -x_{3}^{*} & \alpha\\
x_{1}^{*} & -x_{2} & \alpha & -x_{3}^{*}\\
\end{array}\right].\\
\end{equation}
\end{example}

Given a $2^{a-1} \times 2^{a-1}$ TE-COD, $\textbf{X}_1,$ in $a$ variables, note that every column of $\textbf{X}_1$ contains exactly $a$ distinct variables and $2^{a-1} - a$ copies of $\alpha$.

For a given set of $2a$ complex variables $\{x_1,x_2, \cdots, x_{2a}\},$ let 
the complex variables $\tilde{x}_{1}, \tilde{x}_{2} \cdots \tilde{x}_{2a}$ be defined  as
\begin{equation}
\label{interleaving}
\tilde{x}_{m} = x_{mI} +{\bf i} x_{(m+a)Q} \mbox{ and }\tilde{x}_{m+a} = x_{(m+a)I} +{\bf i} x_{mQ}
\end{equation}
for all $m= 1$ to $a,$ where $x_m=x_{iI}+{\bf i}x_{iQ}.$ Notice that the new variables are nothing but the so called {\it coordinate interleaved variables} extensively used in \cite{KhR}.

If $\textbf{X}_{1}$ and $\textbf{X}_{2}$ represents two identical square TE-CODs for $2^{a-1}$ antennas (relays) in variables $\tilde{x}_{1}, \tilde{x}_{2} \cdots \tilde{x}_{a}$ and $\tilde{x}_{a+1}, \tilde{x}_{a+2} \cdots \tilde{x}_{2a}$ respectively, then a square TE-CIOD for $2^{a}$ antennas (relays) is constructed as 
\begin{equation*}
\textbf{X}_{\mbox{TE-CIOD}} = \left[\begin{array}{cccccccc}
\textbf{X}_{1} & \textbf{0}_{2^{a-1} \times 2^{a-1}}\\ 
\textbf{0}_{2^{a-1} \times 2^{a-1}} & \textbf{X}_{2}\\ 
\end{array}\right].
\end{equation*}
\indent Note that every column of $\textbf{X}_{\mbox{TE-CIOD}}$ has $a$ distinct complex variables,  $2^{a-1} - a$ copies of $\alpha,$ and $2^{a-1}$ zeros. Therefore, with a $(2^{a-1}+a)$-length vector $\bar{\textbf{x}}$ given by
\begin{equation*}
\bar{\textbf{x}} = [\underbrace{\alpha ~\alpha ~\cdots ~\alpha}_{2^{a-1} - a ~\mbox{ times }} ~\tilde{x}_{1} ~\tilde{x}_{2} ~\cdots ~\tilde{x}_{2a}]^{T} \in \mathbb{C}^{(2^{a-1}+a) \times 1},
\end{equation*}
the design $\textbf{X}_{\mbox{TE-CIOD}}$ can be written in its column vector representation \cite{Lia} as

{\small
\begin{equation}
\label{alpha_ciod}
\textbf{X}_{\mbox{TE-CIOD}}  = \left[ \textbf{C}_{1}\bar{\textbf{x}} + \textbf{D}_{1}\bar{\textbf{x}}^{*} ~~ \textbf{C}_{2}\bar{\textbf{x}} + \textbf{D}_{2}\bar{\textbf{x}}^{*} ~~ \cdots ~~ \textbf{C}_{2^a}\bar{\textbf{x}} + \textbf{D}_{2^a}\bar{\textbf{x}}^{*} \right],
\end{equation}
}

\noindent where $\textbf{C}_{\lambda}, \textbf{D}_{\lambda} \in \mathbb{C}^{2^{a} \times (2^{a-1} + a)},$ $k=1,2,\cdots,2^a,$ are the column-vector representation matrices of $\textbf{X}_{\mbox{TE-CIOD}}$. 

The following theorem provides two important relations satisfied by the matrices $\textbf{C}_{\lambda}, \textbf{D}_{\lambda}$ of TE-CIODs.
\begin{theorem}
\label{thm2}
The column-vector representation matrices $\textbf{C}_{\lambda}, \textbf{D}_{\lambda}$ of a TE-CIOD, $\textbf{X}_{\mbox{TE-CIOD}},$ can be chosen to satisfy the following relations,
{\small
\begin{equation}
\label{condition1}
\textbf{C}_{\lambda}\textbf{C}^{H}_{\lambda} + \textbf{D}_{\lambda}\textbf{D}^{H}_{\lambda} = \left[\begin{array}{cc}
\textbf{I}_{2^{a-1}} & \textbf{0}_{2^{a-1} \times 2^{a-1}}\\
\textbf{0}_{2^{a-1} \times 2^{a-1}}  & \textbf{0}_{2^{a-1} \times 2^{a-1}}\\
\end{array}\right]~\forall ~\lambda  = 1 \mbox{ to }2^{a-1},
\end{equation}
}

and
{\small
\begin{equation}
\label{condition2}
\textbf{C}_{\lambda}\textbf{C}^{H}_{\lambda} + \textbf{D}_{\lambda}\textbf{D}^{H}_{\lambda} = \left[\begin{array}{cc}
\textbf{0} & \textbf{0}_{2^{a-1} \times 2^{a-1}}\\
\textbf{0}_{2^{a-1} \times 2^{a-1}}  & \textbf{I}_{2^{a-1}}\\
\end{array}\right]
\end{equation}
 ~~~~~~~~~~~~~~~~~~~~~~~~~~~~$\forall ~\lambda  = 2^{a-1} + 1 \mbox{ to }2^{a}$.
}
\end{theorem}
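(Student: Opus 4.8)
The plan is to reduce the block-diagonal identity for $\textbf{X}_{\mbox{TE-CIOD}}$ to the known quasi-orthogonality property of a single square TE-COD, and then track how the block structure partitions the columns into two groups. First I would recall the column-vector representation \eqref{alpha_ciod}: for each $\lambda$, the $\lambda$-th column of $\textbf{X}_{\mbox{TE-CIOD}}$ equals $\textbf{C}_{\lambda}\bar{\textbf{x}} + \textbf{D}_{\lambda}\bar{\textbf{x}}^{*}$, and because of the block-diagonal shape, for $\lambda = 1$ to $2^{a-1}$ this column has its nonzero entries confined to the first $2^{a-1}$ rows (the block occupied by $\textbf{X}_1$), while for $\lambda = 2^{a-1}+1$ to $2^{a}$ the nonzero entries sit in the last $2^{a-1}$ rows (the block occupied by $\textbf{X}_2$). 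Consequently $\textbf{C}_{\lambda}$ and $\textbf{D}_{\lambda}$ have all rows outside the relevant block equal to zero, so $\textbf{C}_{\lambda}\textbf{C}_{\lambda}^{H} + \textbf{D}_{\lambda}\textbf{D}_{\lambda}^{H}$ is automatically supported on the corresponding diagonal block; this already forces the off-block zeros and the zero diagonal block in \eqref{condition1} and \eqref{condition2}.

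Next I would identify the surviving $2^{a-1}\times 2^{a-1}$ block with the column-vector representation matrices of the constituent TE-COD $\textbf{X}_1$ (respectively $\textbf{X}_2$). Writing $\textbf{X}_1 = \textbf{A}_1^{(1)}\bar{\textbf{x}}_1 + \textbf{B}_1^{(1)}\bar{\textbf{x}}_1^{*} \ \cdots$ in the $(2^{a-1}+a)$-length vector $\bar{\textbf{x}}_1 = [\alpha \cdots \alpha \ \tilde x_1 \cdots \tilde x_a]^T$, the restriction of $\textbf{C}_{\lambda}$ to the first block row is exactly the $\lambda$-th representation matrix of $\textbf{X}_1$, and similarly for $\textbf{D}_{\lambda}$. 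So the nontrivial content of the theorem is the claim that, for a square TE-COD, the column representation matrices satisfy $\textbf{A}_{\lambda}\textbf{A}_{\lambda}^{H} + \textbf{B}_{\lambda}\textbf{B}_{\lambda}^{H} = \textbf{I}_{2^{a-1}}$ for every column $\lambda$. I would establish this from Definition~\ref{def_alpha_cod} together with the defining orthogonality relation of the underlying COD: the $(p,q)$ entry of $\textbf{A}_{\lambda}\textbf{A}_{\lambda}^{H} + \textbf{B}_{\lambda}\textbf{B}_{\lambda}^{H}$ is the coefficient, in the quadratic form $\sum_\lambda |[\textbf{X}_1]_{p,\lambda}|^2$-type expansion, coupling rows $p$ and $q$; since each column of a COD is, up to conjugation and sign, a permutation of the variables, and since replacing the zeros by the constant $\alpha$ only changes which "variable" sits in a slot (but $\alpha$ is treated as a symbol of unit contribution when computing $\textbf{A}_{\lambda}\textbf{A}_{\lambda}^{H}+\textbf{B}_{\lambda}\textbf{B}_{\lambda}^{H}$, exactly as the proof in \cite{HRH} does for TE-CODs), the column contributions sum to the identity. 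Equivalently, $\textbf{X}_1 \textbf{X}_1^{H}$ evaluated with $|\tilde x_m|^2 = |\alpha|^2 = 1$ is $2^{a-1}\textbf{I}$, and differentiating this per-column gives the per-$\lambda$ identity; I would cite the corresponding statement in \cite{HRH} (the analogue of \eqref{condition1} for square TE-CODs) rather than redo it.

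Finally I would assemble the pieces: for $\lambda \le 2^{a-1}$, $\textbf{C}_{\lambda}\textbf{C}_{\lambda}^{H} + \textbf{D}_{\lambda}\textbf{D}_{\lambda}^{H}$ has its top-left block equal to the TE-COD identity $\textbf{I}_{2^{a-1}}$ and every other block zero, which is \eqref{condition1}; for $\lambda > 2^{a-1}$ the same argument applied to $\textbf{X}_2$ puts $\textbf{I}_{2^{a-1}}$ in the bottom-right block, which is \eqref{condition2}. The phrase "can be chosen" in the statement refers to the freedom in ordering the entries of $\bar{\textbf{x}}$ and in the choice of the representation matrices consistent with \eqref{alpha_ciod}; I would note that the natural choice induced by stacking $\textbf{X}_1$ and $\textbf{X}_2$ with a common $\bar{\textbf{x}}$ realizes the relations. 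The main obstacle I anticipate is bookkeeping rather than conceptual: making the identification between the blocks of $\textbf{C}_{\lambda}, \textbf{D}_{\lambda}$ and the representation matrices of the constituent TE-CODs precise, and being careful that the coordinate-interleaved variables $\tilde x_m$ defined in \eqref{interleaving} still satisfy $E[|\tilde x_m|^2]$ consistent with the normalization so that the "unit contribution per slot" step is legitimate — this is where I would slow down and spell out the entrywise computation, leaning on the COD orthogonality and the TE-COD result of \cite{HRH}.
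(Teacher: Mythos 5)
Your proposal is correct and follows essentially the same route as the paper: both arguments rest on the block-diagonal support forcing the zero blocks, and on the fact that each column of the constituent TE-COD contains each of its $a$ variables exactly once and $2^{a-1}-a$ copies of $\alpha$ that can be matched bijectively to the $2^{a-1}-a$ $\alpha$-entries of $\bar{\textbf{x}}$ (which is precisely the ``can be chosen'' freedom), so that every supported row of $\textbf{C}_{\lambda}$ or $\textbf{D}_{\lambda}$ carries exactly one $\pm 1$ in a distinct column. The only difference is presentational: the paper carries out this counting directly on the $\textbf{C}_{\lambda},\textbf{D}_{\lambda}$ of the TE-CIOD rather than quoting the square TE-COD identity from \cite{HRH}, and your restriction of $\textbf{C}_{\lambda}$ to a block is the TE-COD representation matrix only after appending $a$ zero columns for the variables absent from that block, which is harmless for the product.
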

\begin{proof}
Since the entries of a $\textbf{X}_{\mbox{TE-CIOD}}$ are of the form $\alpha$, $\pm \tilde{x}_{i}$ and $\pm \tilde{x}^{*}_{i} ~\forall~i = 1$ to $2a$ and the vector $\bar{\textbf{x}}$ is given by $\bar{\textbf{x}} = [ \underbrace{\alpha ~\alpha ~\cdots ~\alpha}_{2^{a-1} - a ~times} ~\tilde{x}_{1}~\tilde{x}_{2} \cdots \tilde{x}_{2a}]^{T}$, it is straightforward to verify that the matrices $\textbf{C}_{\lambda}, \textbf{D}_{\lambda}$ satisfy the following three properties,
\begin{itemize}
\item The entries of the matrices $\textbf{C}_{\lambda}, \textbf{D}_{\lambda}$ are $0, \pm 1$.
\item The matrices $\textbf{C}_{\lambda}, \textbf{D}_{\lambda}$ can have at most one non-zero entry in every row.
\item The matrices $\textbf{C}_{\lambda}$ and $\textbf{D}_{\lambda}$ do not contain non-zero entries in the same row.
\end{itemize}
Note that since TE-CIODs are constructed using TE-CODs (using a block diagonal structure), out of the $2a$ complex variables, only $a$ number of them appear exactly once (either as $\pm \tilde{x}_{i}$ or $\pm \tilde{x}^{*}_{i}$) in every column of the design. In particular, the variables $\tilde{x}_{1}, \tilde{x}_{2} \cdots \tilde{x}_{a}$ appear only in the first $2^{a-1}$ rows and the first $2^{a-1}$columns of the design where as the variables $\tilde{x}_{a+1}, \tilde{x}_{a+2} \cdots \tilde{x}_{2a}$ appear only in the last $2^{a-1}$ columns and the last $2^{a-1}$ rows of the design. We only provide a proof for the relation in \eqref{condition1}. Since a TE-CIOD is block diagonal, the relation in \eqref{condition2} can be proved on the similar lines of that of \eqref{condition1}. Without loss of generality, let us assume that $l$ out of the $a$ complex variables which appear in the $\lambda$-th column (for $1 \leq k \leq 2^{a-1}$) of the design are of the form $\pm \tilde{x}_{i}$. With such an assumption, the matrix $\textbf{C}_{\lambda}$ must have $2^{a-1} - a +l$ non-zero rows (where $l$ non-zero rows are for the variables and the rest are for the $\alpha$'s). Further, as the remaining $a-l$ variables appear as conjugates (i.e., of the form $\pm \tilde{x}^{*}_{i}$), the matrix $\textbf{D}_{\lambda}$ must have $a-l$ non-zero rows. Since there are $2^{a-1} - a$ copies of $\alpha$ in the vector $\bar{\text{x}}$, the non-zero entries in the $2^{a-1} - a$ non-zero rows (which are alloted for the $2^{a-1} - a$ copies of $\alpha$) of $\textbf{C}_{\lambda}$ can be chosen to appear in different columns. Therefore, each of the first $2^{a-1}$ columns of $\textbf{C}_{\lambda}$ and $\textbf{D}_{\lambda}$ will have exactly one non-zero entry. Since the variables $\tilde{x}_{a+1}, \tilde{x}_{a+2} \cdots \tilde{x}_{2a}$ do not appear in the first $2^{a-1}$ columns of the design, each of the last $2^{a-1}$ columns of $\textbf{C}_{\lambda}$ and $\textbf{D}_{\lambda}$ are zeros. Hence the relay matrices  satisfy the relation in \eqref{condition1}.
\end{proof}

\indent The formal definition of TE-CIOD is as follows:
\begin{definition} \label{def_pdstbc} \noindent The collection $\mathcal{C}$ of $2^a \times 2^a$ codeword matrices given by \eqref{code_word},
\begin{equation}
\label{dstbc}
\mathcal{C} = \left\{ \textbf{X} \mid \forall ~\textbf{x} \in \mathcal{S} \right\}
\end{equation}

\noindent is called a Training-Embedded Coordinate Interleaved Orthogonal Design  (TE-CIOD) which is determined by the sets $\left\lbrace \textbf{P}, \textbf{Q}, \textbf{A}_{\lambda}, \textbf{B}_{\lambda}\right\rbrace$ and $\mathcal{S},$ where the column vector representation matrices $\textbf{C}_{\lambda}$ and $\textbf{D}_{\lambda}$ of $\textbf{X}_{\mbox{TE-CIOD}}$ given in \eqref{alpha_ciod} are  used as the relay matrices, $\textbf{A}_{\lambda}$ and  $\textbf{B}_{\lambda}$ respectively.
\end{definition}
\indent Note that unlike the existing DSTBCs, TE-DSTBCs contain the training symbols in the code structure along with the information symbols justifying their name. Also, note that unlike TE-CODs, not all the entries of TE-CIODs are non-zero. In particular, TE-CIODs, have block diagonal structure. Due to the block diagonal structure, the number of $\alpha$'s transmitted from the source to the relays is much lesser compared to that of TE-CODs for the same number of relays. In the following section,  we show that this training-embedding enables SSD property TE-CIODs.\\
\indent In the following example, we explicitly a TE-CIOD for a network with 8 relays.
\begin{example}
\label{example_4}
The $8 \times 8$ TE-CIOD is given by,
\begin{equation}
\label{CIOD_8_antennas}
 \left[\begin{array}{rrrrrrrr}
\tilde{x}_{3} & \alpha & \tilde{x}_{2} & \tilde{x}_{1} & 0 & 0 & 0 & 0\\
\alpha & \tilde{x}_{3} & \tilde{x}_{1}^{*} & -\tilde{x}_{2}^{*}  & 0 & 0 & 0 & 0\\
\tilde{x}_{2}^{*} & \tilde{x}_{1} & -\tilde{x}_{3}^{*} & \alpha  & 0 & 0 & 0 & 0\\
\tilde{x}_{1}^{*} & -\tilde{x}_{2} & \alpha & -\tilde{x}_{3}^{*}  & 0 & 0 & 0 & 0\\
0 & 0 & 0 & 0 & \tilde{x}_{6} & \alpha & \tilde{x}_{5} & \tilde{x}_{4}\\
0 & 0 & 0 & 0 & \alpha & \tilde{x}_{6} & \tilde{x}_{4}^{*} & -\tilde{x}_{5}^{*}\\
0 & 0 & 0 & 0 & \tilde{x}_{5}^{*} & \tilde{x}_{4} & -\tilde{x}_{6}^{*} & \alpha\\
0 & 0 & 0 & 0 & \tilde{x}_{4}^{*} & -\tilde{x}_{5} & \alpha & -\tilde{x}_{6}^{*}\\
\end{array}\right]
\end{equation}
where
$\tilde{x}_{m} = x_{mI} +{\bf i} x_{(m+3)Q} \mbox{ and }\tilde{x}_{m+3} = x_{(m+3)I} +{\bf i} x_{mQ}$ for $m= 1$ to $3$.
The set of information vectors equipped at the source is given by
\begin{equation*}
\mathcal{S} = \{ \textbf{x} = [\alpha ~x_{1} ~x_{2} ~x_{3} ~x_{4} ~x_{5} ~x_{6}]^T ~|~ \forall ~ x_{i} \in \mathcal{M} \}.
\end{equation*}
The precoding matrices are given by
\begin{equation*}
\textbf{P} = \frac{1}{2}\left[\begin{array}{cccccccc}
1 & 0 & 0 & 0 & 0 & 0 & 0 \\
0 & 1 & 0 & 0 & 1 & 0 & 0 \\
0 & 0 & 1 & 0 & 0 & 1 & 0 \\
0 & 0 & 0 & 1 & 0 & 0 & 1 \\
0 & 1 & 0 & 0 & 1 & 0 & 0 \\
0 & 0 & 1 & 0 & 0 & 1 & 0 \\
0 & 0 & 0 & 1 & 0 & 0 & 1 \\
\end{array}\right],
\end{equation*}

\begin{equation*}
~ \textbf{Q} = \frac{1}{2}\left[\begin{array}{rrrrrrrr}
0 & 0 & 0 & 0 & 0 & 0 & 0 \\
0 & 1 & 0 & 0 & -1 & 0 & 0 \\
0 & 0 & 1 & 0 & 0 & -1 & 0 \\
0 & 0 & 0 & 1 & 0 & 0 & -1 \\
0 & -1 & 0 & 0 & 1 & 0 & 0 \\
0 & 0 & -1 & 0 & 0 & 1 & 0 \\
0 & 0 & 0 & -1 & 0 & 0 & 1 \\
\end{array}\right].
\end{equation*}
The relay matrices $\textbf{A}_{\lambda}, \textbf{B}_{\lambda} \in \mathcal{C}^{8 \times 7}$ required to construct the $8 \times 8$ TE-CIOD are given by
\begin{equation*}
\textbf{A}_{i} = \left[\begin{array}{cc}
\textbf{E}_{i}\\
\textbf{0}_{4 \times 7}\\
\end{array}\right], ~\textbf{B}_{i} = \left[\begin{array}{cc}
\textbf{F}_{i}\\
\textbf{0}_{4 \times 7}\\
\end{array}\right] \mbox{ for } 1 \leq i \leq 4 \mbox{ and }
\end{equation*}
\begin{equation*}
\textbf{A}_{i} = \left[\begin{array}{cc}
\textbf{0}_{4 \times 7}\\
\textbf{E}_{i}\\
\end{array}\right], ~\textbf{B}_{i} = \left[\begin{array}{cc}
\textbf{0}_{4 \times 7}\\
\textbf{F}_{i}\\
\end{array}\right] \mbox{ for } 5 \leq i \leq 8,
\end{equation*}
where
\begin{equation*}
\textbf{E}_{1} = \left[\begin{array}{rrrrrrrr}
0 & 0 & 0 & 1 & 0 & 0 & 0\\
1  & 0 & 0 & 0 & 0 & 0 & 0\\
0 & 0 & 0 & 0 & 0 & 0 & 0\\
0 & 0 & 0 & 0 & 0 & 0 & 0\\
\end{array}\right] ;~ 
\end{equation*}
\begin{equation*}
\textbf{F}_{1} = \left[\begin{array}{rrrrrrrr}
0 & 0 & 0 & 0 & 0 & 0 & 0\\
0  & 0 & 0 & 0 & 0 & 0 & 0\\
0 & 0 & 1 & 0 & 0 & 0 & 0\\
0 & 1 & 0 & 0 & 0 & 0 & 0\\
\end{array}\right];
\end{equation*}
\begin{equation*}
\textbf{E}_{2} = \left[\begin{array}{rrrrrrrr}
1 & 0 & 0 & 0  & 0 & 0 & 0\\
0  & 0 & 0 & 1 & 0 & 0 & 0\\
0 & 1 & 0 & 0 & 0 & 0 & 0\\
0 & 0 & -1 & 0 & 0 & 0 & 0\\
\end{array}\right];~ 
\textbf{F}_{2} = \textbf{0}_{4 \times 7};
\end{equation*}
\begin{equation*}
\textbf{E}_{3} = \left[\begin{array}{rrrrrrrr}
0 & 0 & 1 & 0 & 0 & 0 & 0\\
0  & 0 & 0 & 0 & 0 & 0 & 0\\
0 & 0 & 0 & 0 & 0 & 0 & 0\\
1 & 0 & 0 & 0 & 0 & 0 & 0\\
\end{array}\right] ;~ 
\end{equation*}
\begin{equation*}
\textbf{F}_{3} = \left[\begin{array}{rrrrrrrr}
0 & 0 & 0 & 0 & 0 & 0 & 0\\
0  & 1 & 0 & 0 & 0 & 0 & 0\\
0 & 0 & 0 & -1 & 0 & 0 & 0\\
0 & 0 & 0 & 0 & 0 & 0 & 0\\
\end{array}\right];
\end{equation*}
\begin{equation*}
\textbf{E}_{4} = \left[\begin{array}{rrrrrrrr}
0 & 1 & 0 & 0 & 0 & 0 & 0\\
0  & 0 & 0 & 0 & 0 & 0 & 0\\
1 & 0 & 0 & 0 & 0 & 0 & 0\\
0 & 0 & 0 & 0 & 0 & 0 & 0\\
\end{array}\right];~
\end{equation*}
\begin{equation*}
\textbf{F}_{4} = \left[\begin{array}{rrrrrrrr}
0 & 0 & 0 & 0 & 0 & 0 & 0\\
0  & 0 & -1 & 0 & 0 & 0 & 0\\
0 & 0 & 0 & 0 & 0 & 0 & 0\\
0 & 0 & 0 & -1 & 0 & 0 & 0\\
\end{array}\right].
\end{equation*}
\begin{equation*}
\textbf{E}_{5} = \left[\begin{array}{rrrrrrrr}
0 & 0 & 0 & 0 & 0 & 0 & 1\\
1  & 0 & 0 & 0 & 0 & 0 & 0\\
0 & 0 & 0 & 0 & 0 & 0 & 0\\
0 & 0 & 0 & 0 & 0 & 0 & 0\\
\end{array}\right] ;~ 
\end{equation*}
\begin{equation*}
\textbf{F}_{5} = \left[\begin{array}{rrrrrrrr}
0 & 0 & 0 & 0 & 0 & 0 & 0\\
0  & 0 & 0 & 0 & 0 & 0 & 0\\
0 & 0 & 0 & 0 & 0 & 1 & 0\\
0 & 0 & 0 & 0 & 1 & 0 & 0\\
\end{array}\right];
\end{equation*}
\begin{equation*}
\textbf{E}_{6} = \left[\begin{array}{rrrrrrrr}
1 & 0 & 0 & 0 & 0 & 0 & 0\\
0  & 0 & 0 & 0 & 0 & 0 & 1\\
0 & 0 & 0 & 0 & 1 & 0 & 0\\
0 & 0 & 0 & 0 & 0 & -1 & 0\\
\end{array}\right];~ \textbf{F}_{6} = \textbf{0}_{4 \times 7};
\end{equation*}
\begin{equation*}
\textbf{E}_{7} = \left[\begin{array}{rrrrrrrr}
0 & 0 & 0 & 0 & 0 & 1 & 0\\
0  & 0 & 0 & 0 & 0 & 0 & 0\\
0 & 0 & 0 & 0 & 0 & 0 & 0\\
1 & 0 & 0 & 0 & 0 & 0 & 0\\
\end{array}\right] ;~ 
\end{equation*}
\begin{equation*}
\textbf{F}_{7} = \left[\begin{array}{rrrrrrrr}
0 & 0 & 0 & 0 & 0 & 0 & 0\\
0  & 0 & 0 & 0 & 1 & 0 & 0\\
0 & 0 & 0 & 0 & 0 & 0 & -1\\
0 & 0 & 0 & 0 & 0 & 0 & 0\\
\end{array}\right];
\end{equation*}
\begin{equation*}
\textbf{E}_{8} = \left[\begin{array}{rrrrrrrr}
0 & 0 & 0 & 0 & 1 & 0 & 0\\
0  & 0 & 0 & 0 & 0 & 0 & 0\\
1 & 0 & 0 & 0 & 0 & 0 & 0\\
0 & 0 & 0 & 0 & 0 & 0 & 0\\
\end{array}\right] \mbox{ and }~
\end{equation*}
\begin{equation*}
\textbf{F}_{8} = \left[\begin{array}{rrrrrrrr}
0 & 0 & 0 & 0 & 0 & 0 & 0\\
0  & 0 & 0 & 0 & 0 & -1 & 0\\
0 & 0 & 0 & 0 & 0 & 0 & 0\\
0 & 0 & 0 & 0 & 0 & 0 & -1\\
\end{array}\right].
\end{equation*}
The number of channel uses in the first phase and second phase are $7$ and $8,$ respectively. Therefore, the rate of the scheme is $\frac{6}{15}$ complex symbols per channel use.
\end{example}
\begin{table*}
\caption{Rate of the proposed codes are listed for certain number of relays. The symbol $L$ denotes the total number of channel uses including the two phases to construct the designs}
\begin{center}
\begin{tabular}{|c|c|c|c|c|c|c|c|c|c|c|}
\hline  & K = 2 & K = 4 & K = 8 & K = 16 & K = 32\\
\hline $R_{\mbox{square ~TE-CODs}}$  & $\frac{1}{2}$ & $\frac{3}{8}$ & $\frac{4}{14}$ & $\frac{5}{27}$ & $\frac{6}{51}$\\
\hline $L_{\mbox{square ~TE-CODs}}$ & 4 & 8 & 14 & 27 & 51\\
\hline $R_{\mbox{TE-CIODs}}$  & $\frac{1}{2}$ & $\frac{1}{2}$ & $\frac{6}{15}$ & $\frac{8}{26}$ & $\frac{10}{48}$\\

\hline $L_{\mbox{TE-CIODs}}$ & 4 & 8 & 15 & 26 & 48\\ \hline 
\end{tabular} 
\end{center}
\hrule
\label{rate_table_designs}
\end{table*}
\section{SSD, Full-diversity and Rate of TE-CIODs}
\label{sec4}
In this section, we discuss the SSD and full-diversity properties of TE-CIODs and also present comparison of the rate with TE-CODs.
\subsection{SSD Property of TE-CIODs}
\label{sec4_subsec1}
From the results of Theorem \ref{thm2}, the covariance matrix $\textbf{R}$ given in \eqref{covariance_matrix} will not be a scaled identity matrix but a diagonal matrix such that $[\textbf{R}]_{i,i} = [\textbf{R}]_{j,j}$ for $1 \leq i, j \leq 2^{a-1}$ and $[\textbf{R}]_{i,i} = [\textbf{R}]_{j,j}$ for $2^{a-1}+1 \leq i, j \leq 2^{a}$. It can be verified that such a structure on $\textbf{R}$ along with the block diagonal structure of the design ensures that every complex-symbol can be ML decoded independent of others. We illustrate this below, for the TE-CIOD code of Example \ref{example_4} for $8$ relays. The $\textbf{R}$ matrix is not a scaled identity matrix, however, the matrix $\textbf{X}^{H}\textbf{R}^{-1}\textbf{X}$ can be written as 
\begin{equation*}
\textbf{X}^{H}\textbf{R}^{-1}\textbf{X} = \left[\begin{array}{cc}
\textbf{X}_{1}^{H}\textbf{R}_{1}^{-1}\textbf{X}_{1} & \textbf{0}_{2^{a-1} \times 2^{a-1}}\\
\textbf{0}_{2^{a-1} \times 2^{a-1}} & \textbf{X}_{2}^{H}\textbf{R}_{2}^{-1}\textbf{X}_{2}\\
\end{array}\right],
\end{equation*}
where the matrices $\textbf{R}_{1}$ and $\textbf{R}_{2}$ are scaled identity matrices. Therefore, $\textbf{X}^{H}\textbf{R}^{-1}\textbf{X}$ becomes
\begin{equation*}
\textbf{X}^{H}\textbf{R}^{-1}\textbf{X} = \left[\begin{array}{cc}
\textbf{R}_{1}^{-1}\textbf{X}_{1}^{H}\textbf{X}_{1} & \textbf{0}_{2^{a-1} \times 2^{a-1}}\\
\textbf{0}_{2^{a-1} \times 2^{a-1}} & \textbf{R}_{2}^{-1}\textbf{X}_{2}^{H}\textbf{X}_{2}\\
\end{array}\right],
\end{equation*}
where the matrices $\textbf{X}_{1}^{H}\textbf{X}_{1}$ and $\textbf{X}_{2}^{H}\textbf{X}_{2}$ are given by \eqref{XHX_CIOD_X1} and \eqref{XHX_CIOD_X2} respectively (shown at the top of the next page). 

Notice that with $\alpha=0,$ these matrices reduce to the one corresponding to the original CIODs reported in \cite{KhR}.
\begin{figure*}
\begin{equation}
\label{XHX_CIOD_X1}
\textbf{X}_{1}^{H}\textbf{X}_{1} = \left[\begin{array}{cccccccc}
|\alpha|^{2} + \sum_{i = 1}^{3} |\tilde{x}_{i}|^{2} & 2\mbox{Re}(\tilde{x}_{3}^{*}\alpha) & 2\mbox{Re}(\tilde{x}_{1}^{*}\alpha^{*}) & 2{\bf i}\mbox{Im}(\tilde{x}_{2}\alpha)\\
* & |\alpha|^{2} + \sum_{i = 1}^{3} |\tilde{x}_{i}|^{2} & 2{\bf i}\mbox{Im}(\tilde{x}_{2}\alpha^{*}) & 2\mbox{Re}(\tilde{x}_{1}\alpha^{*})\\
* & * & |\alpha|^{2} + \sum_{i = 1}^{3} |\tilde{x}_{i}|^{2} & -2\mbox{Re}(\tilde{x}_{3}\alpha)\\
* & * & * & |\alpha|^{2} + \sum_{i = 1}^{3} |\tilde{x}_{i}|^{2}\\
\end{array}\right].
\end{equation}
\begin{equation}
\label{XHX_CIOD_X2}
\textbf{X}_{2}^{H}\textbf{X}_{2} = \left[\begin{array}{cccccccc}
|\alpha|^{2} + \sum_{i = 1}^{3} |\tilde{x}_{i}|^{2} & 2\mbox{Re}(\tilde{x}_{6}^{*}\alpha) & 2\mbox{Re}(\tilde{x}_{4}^{*}\alpha^{*}) & 2{\bf i}\mbox{Im}(\tilde{x}_{5}\alpha)\\
* & |\alpha|^{2} + \sum_{i = 1}^{3} |\tilde{x}_{i}|^{2} & 2{\bf i}\mbox{Im}(\tilde{x}_{5}\alpha^{*}) & 2\mbox{Re}(\tilde{x}_{4}\alpha^{*})\\
* & * & |\alpha|^{2} + \sum_{i = 1}^{3} |\tilde{x}_{i}|^{2} & -2\mbox{Re}(\tilde{x}_{6}\alpha)\\
* & * & * & |\alpha|^{2} + \sum_{i = 1}^{3} |\tilde{x}_{i}|^{2}\\
\end{array}\right].
\end{equation}
\hrule
\end{figure*}
\subsection{Full-Diversity of TE-CIODs}
\label{sec4_subsec1_1}
The TE-CIODs provide fully diversity for a special class of two-dimensional signal sets for which the corresponding CIODs \cite{KhR} are fully diverse. This is because the difference of any two codewords of a TE-CIOD is also a difference of two codewords (those with $\alpha=0$)  of the corresponding CIOD.  The conditions on the signal sets for which CIODs provide full diversity has been proved in \cite{KhR} along with illustrative examples. It turns out that the signal sets for which full-diversity is achieved are precisely those in which no two signal points are on a line parallel to the $x-$axis or parallel to the $y-$ axis. We refer the readers to \cite{KhR} for more details and proofs on the choice of such signal sets for which TE-CIODs offer full-diversity. 

\subsection{Rate of TE-CIODs}
\label{sec4_subsec2}
To distributively construct an TE-CIOD for a network with $2^{a}$ relays, the number of channel uses in the first phase and the second phase are $2a + \lceil \frac{2^{a-1} - a}{2} \rceil$ and $2^{a},$ respectively. Hence the rate of the scheme is 
\begin{equation*}
R_{\mbox{TE-CIODs}} = \frac{2a}{2a + \lceil \frac{2^{a-1} - a}{2} \rceil + 2^{a}}.
\end{equation*}

The rate for square TE-CODs with $2^a$ number of relays is 
$ \frac{a + 1}{a + 1 + \lceil \frac{2^{a} - a - 1}{2} \rceil + 2^{a}}$
complex symbols per channel use \cite{HRH}. Comparing with this we see that, for the same number of relays, the rate of TE-CIODs is larger than that of the TE-CODs. This can be easily seen from the Table \ref{rate_table_designs}.
\section{Simulations Results}
\label{sec5}
In this section, we provide the performance comparison (in terms of the symbol error rate (SER) versus $P$, the total power used by all the nodes per channel use) between the DSTBC from TE-CODs and the DSTBC from TE-CIOD for $K = 4$.  For $K = 4$, the rates (in complex symbols per channel use in the second phase) of the DSTBCs from TE-COD and TE-CIOD are respectively $\frac{3}{4}$ and $1.$ Hence, for a fair comparison, we make the bits per channel use (bpcu) in the second phase equal for both codes, in particular, we make it equal to $3$ bpcu for the simulation purpose.
To achieve the common rate of $3$ bpcu in the second phase, the TE-COD and the TE-CIOD respectively employs the 16-QAM signal set $\{-1+{\bf i}, 1+{\bf i},-1-{\bf i}, 1-{\bf i}, -3+{\bf i}, 3+{\bf i},-3-{\bf i}, 3-{\bf i}, -1+3{\bf i}, 1+3{\bf i},-1-3{\bf i}, 1-3{\bf i}, -3+3{\bf i}, 3+3{\bf i},-3-3{\bf i}, 3-3{\bf i}\}$ and a rotated version of the $8$-QAM signal set $\{-3+{\bf i}, -1+{\bf i}, 1+{\bf i},   3+{\bf i}, -3-{\bf i}, -1-{\bf i}, 1-{\bf i}, 3+{\bf i}\}$ to construct the DSTBCs. 
The SER performance of both codes are plotted in Fig. \ref{simulations} which shows that TE-CIOD performs better than TE-COD for $K = 4$. \\

\begin{figure}
\centering
\includegraphics[width=3.5in]{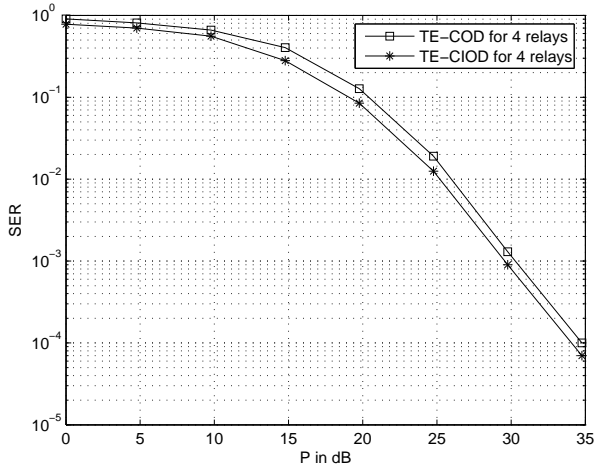}
\caption{SER (symbol error rate) versus P comparison between the DSTBC from TE-COD and the DSTBC from TE-CIOD for $K = 4$ with $3$ bpcu.}
\label{simulations}
\end{figure}
\section{Discussion and Conclusions}
\label{sec6}
In this paper, through a training based distributed space-time coding technique, we have shown to construct the variants of the well known class of CIODs in two-hop relay networks using amplify and forward protocol. This idea can be extended to construct all the multi-group decodable codes \cite{KaR2} existing for point to point co-located MIMO channels in two-hop wireless networks.
\section*{Acknowledgment}
This work was partly supported by the DRDO-IISc Program on Advanced Research in Mathematical Engineering, IISc Bangalore, India, and partly  by the Research Council of Norway through the project 176773/S10 entitled "Optimized Heterogeneous Multiuser MIMO Networks - OptiMO" and the project 183311/S10 entitled "Mobile-to-Mobile Communication Systems (M2M)".

\end{document}